%
%
%
%
\documentclass{amsart}

\usepackage{graphicx}
\usepackage{amsmath}
\usepackage{amssymb}

\newtheorem{theorem}{Theorem}[section]

\newtheorem{cor}{corollary}[section]

\theoremstyle{definition}

\theoremstyle{remark}
\newtheorem{remark}[theorem]{Remark}


\newcommand{\RNum}[1]{\uppercase\expandafter{\romannumeral#1\relax}}

\begin{document}

\title{FRACTIONAL SCHR\"{O}DINGER EQUATION WITH ZERO AND LINEAR POTENTIALS}

\author{Saleh Baqer}
\address{Department of Mathematics, Faculty of Science, Kuwait University}
\curraddr{Department of Mathematics and Statistics,
Case Western Reserve University, Cleveland, Ohio 43403}
\email{mpcosmo57@gmail.com}

\author{Lyubomir Boyadjiev}
\address{Department of Mathematics, Faculty of Science, Kuwait University}
\email{boyadjievl@yahoo.com}

\keywords{Fractional Schr\"{o}dinger equation, Caputo time fractional derivative, Fox $H$-function}

\begin{abstract}
This paper is about the fractional Schr\"{o}dinger equation expressed in terms of the Caputo time-fractional and quantum Riesz-Feller space fractional derivatives for particle moving in a potential field. The cases of free particle (zero potential) and a linear potential are considered. For free particle, the solution is obtained in terms of the Fox $H$-function. For the case of a linear potential, the separation of variables method allows the fractional Schr\"{o}dinger equation to be split into space fractional and time fractional ones. By using the Fourier and Mellin transforms for the space equation and the contour integrals technique for the time equation, the solutions are obtained also in terms of the Fox $H$-function. Moreover, some recent results related to the time fractional equation have been revised and reconsidered. The results obtained in this paper contain as particular cases already known results for the fractional Schr\"{o}dinger equation in terms of the quantum Riesz space-fractional derivative and standard Laplace operator.
\end{abstract}

\maketitle

\section*{Introduction}
The Schr\"{o}dinger equation describes dynamics of the state of any quantum mechanical system in terms of its wave function through the Hamiltonian operator of this system. For the wave function $\psi$ of a quantum particle with a mass $m$ moving in a potential field $V$, the Schr\"{o}dinger equation is of the form
\begin{equation}\label{n1.1}
-\frac{\hslash^{2}}{2m}\Delta\psi(x,t)+V(x,t)\psi(x,t)=i\hslash\frac{\partial{}}{\partial{}t}\psi{}\left(x,t\right),
\end{equation}
where $\Delta=\nabla^{2}=\frac{{\partial{}}^2}{{\partial{}x}^2}$ is the Laplacian operator and $\hslash=\frac{h}{2\pi}$ with $h$ being the Planck constant.
The quantum mechanics including the Schr\"{o}dinger equation can be developed by means of the Feynman integrals \cite{9}, e.g., integration over Brownian-like quantum-mechanical paths. It is well known that the Brownian motion is a special case of the L\'{e}vy $\alpha-$stable random processes discovered independently by L\'{e}vy and Khintchine almost a century ago. When $\alpha=2$, the L\'{e}vy $\alpha-$stable distribution is reduced to the Gaussian probability distribution and thus the L\'{e}vy motion becomes the Brownian motion. 
Recently, it was shown in \cite{13} that the Feynamn path integrals over the L\'{e}vy-like quantum- mechanical paths allow to develop a fractional generalization of the quantum mechanics. In particular, whereas the Feynamn path integrals over Brownian trajectories lead to the classical Schr\"{o}dinger equation (\ref{n1.1}), the path integrals over L\'{e}vy trajectories lead to the fractional Scr\"{o}dinger equation ($1<\alpha\leq{2}$), 
\begin{equation}\label{n1.2}
{C}_{\alpha{}}{(-\Delta{})}^{\frac{\alpha{}}{2}}\psi{}\left(x,t\right)+V\left(x,t\right)\
\psi{}\left(x,t\right)=i\hslash{}\frac{\partial{}}{\partial{}t}\psi{}\left(x,t\right),
\end{equation}
where ${C}_{\alpha{}}$ is a positive constant that equals to $\frac{\hslash^{2}}{2m}$ for $\alpha=2$. The operator ${(-\Delta{})}^{\frac{\alpha{}}{2}}$ is introduced as a pseudo-differential operator with the symbol $\vert{k}\vert^{\alpha}$ \cite{14},
\begin{equation}\label{n1.3}
\mathcal{F}\left\{{(-\Delta{})}^{\frac{\alpha{}}{2}}\psi{}\left(x,t\right);k\right\}=\vert{k}\vert^{\alpha}\hat{\psi}\left(k,t\right),
\end{equation}
where
\[
\hat{\psi}\left(k,t\right)=\mathcal{F}\left\{\psi{}\left(x,t\right);k\right\}=\frac{1}{2\pi{}}\int_{-\infty{}}^{\infty{}}e^{-ikx}\
\psi{}\left(x,t\right)\ dx
\]
is the Fourier transform of the function $\psi(x,t)$. The operator ${(-\Delta{})}^{\frac{\alpha{}}{2}}$ was named in \cite{13} the quantum Riesz-fractional derivative of order $\alpha$, but most frequently it is referred to as the fractional Laplacian \cite{11}. Let us note that the conventional Riesz fractional derivative of order $\alpha$ is introduced similarly to (\ref{n1.3}) as a pseudo-differential operator with the symbol $-{\vert{}k\vert{}}^{\alpha}$ \cite{18} and thus differs from the quantum Riesz fractional derivative by a sign. For \small{$\alpha=2$}, the quantum Riesz fractional derivative becomes the negative Laplace operator $-\Delta$ and the equation (\ref{n1.2}) reduces to the classical Schr\"{o}dinger equation (\ref{n1.1}). The cases of a zero potential (free particle), linear potential, Dirac-delta potential, coulomb potential and the infinite potential well for the fractional Schr\"{o}dinger equation (\ref{n1.2}) were studied in \cite{6} and \cite{13}. The generalization of the equation (\ref{n1.2}) when the time derivative is replaced by the Caputo time fractional derivative (\cite{5},\cite{17}) of order $\beta$,  
\[
{^{C}_0D}_{t}^{\beta{}}f\left(t\right)=\left\{\begin{array}{lc}
\frac{1}{\Gamma(n-\beta{})}\
\int_0^t\frac{f^{\left(n\right)}(\tau{})}{{(t-\tau{})}^{\beta{}+1-n}}\ d\tau{}, & n-1<\beta{}<n,\ n\in{}\mathbb{N} \\
\frac{d^n}{dt^n}f\left(t\right), & \beta{}=n\in{}\mathbb{N}
\end{array} \right.,
\]
was considered in \cite{3}, \cite{7}. 
Recently, the Riesz-Feller space fractional derivative ${_xD}_{\theta}^{\alpha{}}$ of order $\alpha$ and skewness $\theta$ has been shown to be relevant for anomalous diffusion models \cite{16}. For sufficiently well-behaved function $f$, the Riesz-Feller space-fractional derivative of order $\alpha$, $0<\alpha\leq{2}$ and skewness $\theta$, $\vert{\theta}\vert{}\leq{\text{min}\{\alpha,2-\alpha\}}$ is defined as a pseudo-differential operator
\begin{equation}\label{n1.4}
\mathcal{F}\left\{_xD_{\theta{}}^{\alpha{}}f\left(x\right);k\right\}=-{\gamma{}}_{\alpha{}}^{\theta{}}(k)\hat{f}(k),
\end{equation} 
with ${\gamma{}}_{\alpha{}}^{\theta{}}(k)={\left\vert{}k\right\vert{}}^{\alpha{}}e^{iSgn\left(k\right)\theta{}\pi/2}$. The fractional Riesz-Feller derivative $_xD_{0}^{\alpha{}}$ is in fact the Riesz fractional derivative. 
In this paper, we employ the quantum fractional Riesz-Feller derivative $D^{\alpha}_{\theta}$ of order $\alpha$, $0<\alpha\leq{2}$ and skewness $\theta$, $\vert{\theta}\vert{}\leq{\text{min}\{\alpha,2-\alpha\}}$ that is as a pseudo-differential operator with the symbol ${\gamma{}}_{\alpha{}}^{\theta{}}(k)$, 
\begin{equation}\label{n1.5}
\mathcal{F}\left\{D_{\theta{}}^{\alpha{}}f\left(x\right);k\right\}={\gamma{}}_{\alpha{}}^{\theta{}}(k)\hat{f}(k).
\end{equation}
According to (\ref{n1.4}) and (\ref{n1.5}), the quantum Riesz-Feller derivative and the Riesz-Feller derivative differ by a sign. The main focus of the paper is on the fractional Schr\"{o}dinger equation
\begin{equation}\label{n1.6}
{C}_{\alpha{}}D_{\theta{}}^{\alpha{}}\psi{}\left(x,t\right)+V\left(x,t\right)\
\psi{}\left(x,t\right)=(i\hslash{})^{\beta}~{^{C}_0D}_{t}^{\beta{}}\psi{}\left(x,t\right),
\end{equation}
with $i^{\beta}=e^{\frac{i\beta\pi}{2}}$, $0<\beta\leq{1}$, $0<\alpha\leq{2}$, $\vert{\theta}\vert{}\leq{\text{min}\{\alpha,2-\alpha\}}$, for the case of a time-independent potential $V(x,t)=V(x)$.
The equation (\ref{n1.6}) for a free particle (zero potential) is studied in the second section of the paper. By using the Fourier and Mellin transforms, the solution of this problem is obtained in terms of the Fox H-function. For the particular choice of the parameters $\beta=1$ and $\theta=0$, the solution obtained reduces to the well known solution of the fractional Schr\"{o}dinger equation with the quantum Riesz derivative for a free particle \cite{14} and also generalizes the corresponding result obtained in \cite{18}. 
The equation (\ref{n1.6}) with linear potential is considered in the third section of the paper. By the method of the separation of variables equation (\ref{n1.6}) is split into space fractional and a time fractional equations. The solution of the space fractional equation obtained by using the Fourier and Mellin integral transforms, contains as particular cases some results already published in \cite{6} and \cite{12}. For the solution of the time fractional equation, the contour integral technique is employed. The approach adopted enables to analyze and comment a similar result published in \cite{3}.

\section{Cauchy-type problem for a free particle}
Consider the fractional Schr\"{o}dinger equation with the Caputo time fractional derivative of order $\beta~(0<\beta\leq{1})$ and with the quantum Riesz-Feller space fractional derivative of order $\alpha~(1<\alpha<2)$ for a free particle ($V(x,t)=0$) in the form 
\begin{equation}\label{n2.1}
{C}_{\alpha{}}D_{\theta{}}^{\alpha{}}\psi{}\left(x,t\right)=(i\hslash{})^{\beta}~{^{C}_0D}_{t}^{\beta{}}\psi{}\left(x,t\right).
\end{equation}
Let us denote by $S$ the space of rapidly decreasing test functions.
 \begin{theorem}
 Let $1<\alpha<2$, $0<\beta\leq{1}, \vert{\theta}\vert{}\leq{2-\alpha}$ and $f\in{S}$. Then the Cauchy-type problem for the equation (\ref{n2.1}) subject to an initial condition 
 \begin{equation}\label{n2.2}
 \psi(x,0)=f(x),~x\in{\mathbb{R}}
 \end{equation}
and boundary conditions
\[
\psi(x,t)\rightarrow{0}~\text{as}~x\rightarrow{\pm{\infty}}
\]
is solvable and its unique solution is given by
\begin{equation}\label{n2.3}
\psi(x,t)=\int^{\infty}_{-\infty} G(x-\xi,t) f(\xi)d\xi,
\end{equation}
where  for $x\neq{0}$ the Green function $G$ is represented in terms of the Fox $H-$function 
\begin{equation}\label{n2.4}
G(x,t)=\frac{1}{\alpha\vert{x}\vert{}}H_{2,2}^{1,1}\left[{\left(\frac{\hslash^{\beta{}}}{C_{\alpha{}}t}\right)}^{\frac{1}{\alpha{}}}e^{-\frac{i\pi{}\left(2-\beta{}\right)}{2\alpha{}}}\
\left\vert{x}\vert{}\right.\left|\begin{array}{
cc}
(1,\frac{1}{\alpha}),(1,\tau) \\
(1,1),(1,\tau)
\end{array}\right.\right]
\end{equation}
with $\tau=\frac{\alpha-sign(x)\theta}{2\alpha}$ and 
\[
G(0,t)=\frac{1}{\pi\alpha}{\left(\frac{\hslash^{\beta{}}}{C_{\alpha{}}t}\right)}^{\frac{1}{\alpha{}}}\Gamma\left(\frac{1}{\alpha}\right)\cos{\left(\frac{\pi{}\theta{}}{2\alpha{}}\right)} e^{-\frac{i\pi{}\left(2-\beta{}\right)}{2\alpha{}}}.
\]
\end{theorem}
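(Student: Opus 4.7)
The plan is to Fourier-transform (\ref{n2.1}) in $x$, solve the resulting Caputo fractional ODE in $t$ via the Mittag-Leffler function, and then recover $G(x,t)$ by inverse Fourier transform, recognizing the outcome as the claimed Fox $H$-function through a Mellin--Barnes representation.

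First, applying $\mathcal{F}\{\cdot;k\}$ to both sides of (\ref{n2.1}) and using (\ref{n1.5}) together with the fact that ${}^{C}_0 D_t^{\beta}$ commutes with the $x$-Fourier transform reduces the problem to
\[
(i\hslash)^{\beta}\,{}^{C}_0 D_t^{\beta}\hat\psi(k,t)=C_{\alpha}\gamma_{\alpha}^{\theta}(k)\hat\psi(k,t),\qquad \hat\psi(k,0)=\hat f(k).
\]
The standard solution theory for linear Caputo fractional ODEs yields
\[
\hat\psi(k,t)=E_{\beta}\bigl(C_{\alpha}(i\hslash)^{-\beta}\gamma_{\alpha}^{\theta}(k)t^{\beta}\bigr)\hat f(k)=:\hat G(k,t)\hat f(k),
\]
so (\ref{n2.3}) follows from the convolution theorem. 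Existence and uniqueness reduce to the well-definedness of the inverse Fourier transform of $\hat G\cdot\hat f$ under the decay hypothesis combined with the injectivity of $\mathcal{F}$ on $S$.

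To obtain the explicit Fox $H$-form, I would write $(i\hslash)^{-\beta}=\hslash^{-\beta}e^{-i\beta\pi/2}$ and $\gamma_{\alpha}^{\theta}(k)=|k|^{\alpha}e^{i\,\mathrm{sgn}(k)\theta\pi/2}$, and insert the Mellin--Barnes representation
\[
E_{\beta}(-w)=\frac{1}{2\pi i}\int_{\mathcal{L}}\frac{\Gamma(s)\Gamma(1-s)}{\Gamma(1-\beta s)}\,w^{-s}\,ds
\]
into the inverse Fourier integral for $G$. Splitting $\int_{-\infty}^{\infty}dk$ as $\int_{-\infty}^{0}+\int_{0}^{\infty}$ freezes $\mathrm{sgn}(k)$ on each half-line; interchanging the $s$- and $k$-integrations (justifiable by absolute convergence on a suitably deformed contour) and evaluating each inner integral $\int_{0}^{\infty}k^{-\alpha s}e^{\pm ikx}dk$ through the classical Gamma identity produces a phase factor depending on $\mathrm{sgn}(x)$. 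Collecting the contributions and setting $\tau=(\alpha-\mathrm{sgn}(x)\theta)/(2\alpha)$ recombines the two halves into a single Mellin--Barnes integral whose ratio of Gamma factors is precisely the one defining $H^{1,1}_{2,2}$ with parameters $(1,1/\alpha),(1,\tau)$ on top and $(1,1),(1,\tau)$ on the bottom, evaluated at $z=(\hslash^{\beta}/(C_{\alpha}t))^{1/\alpha}e^{-i\pi(2-\beta)/(2\alpha)}|x|$, giving (\ref{n2.4}).

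For $x=0$ the oscillatory factor $e^{\pm ikx}$ collapses and the Fourier inversion reduces on each half-line to $\int_{0}^{\infty}E_{\beta}(c|k|^{\alpha}e^{i\sigma_{\pm}})dk$; a single application of the Mellin--Barnes form of $E_{\beta}$ evaluates each integral in terms of $\Gamma(1/\alpha)$, and the two half-line contributions combine through $\tfrac{1}{2}(e^{i\pi\theta/(2\alpha)}+e^{-i\pi\theta/(2\alpha)})$ to give the cosine factor $\cos(\pi\theta/(2\alpha))$ of the stated $G(0,t)$. The principal technical obstacle I anticipate lies in the Mellin--Barnes step: tracking the $\mathrm{sgn}(k)$-phase through the half-line splitting so as to produce the clean $\mathrm{sgn}(x)$-dependence encoded in $\tau$, and rigorously justifying the interchange of the Mellin contour with the oscillatory Fourier inversion integral—typically accomplished by first proving absolute convergence on a deformed contour $\mathcal{L}$ and then invoking Fubini's theorem.
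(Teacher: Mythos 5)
Your opening moves (Fourier transform in $x$, solving the resulting Caputo ODE, convolution theorem) are sound, and your Fourier-space solution $\hat\psi(k,t)=E_\beta\bigl(C_\alpha(i\hslash)^{-\beta}\gamma_\alpha^\theta(k)\,t^\beta\bigr)\hat f(k)$ is indeed what the standard theory of Caputo initial value problems gives for equation (\ref{n2.1}). The genuine gap is in your endgame: the Mellin--Barnes computation you outline cannot terminate at the $H$-function of (\ref{n2.4}). Inserting $E_\beta(-w)=\frac{1}{2\pi i}\int_{\mathcal L}\frac{\Gamma(s)\Gamma(1-s)}{\Gamma(1-\beta s)}w^{-s}\,ds$ and carrying out the half-line $k$-integrals produces a factor $\Gamma(1-\alpha s)$ (from $\int_0^\infty k^{-\alpha s}e^{\pm ikx}\,dk$), but nothing that cancels the denominator $\Gamma(1-\beta s)$; after rescaling $s\mapsto s/\alpha$ and using the reflection formula on the $\mathrm{sgn}$-dependent trigonometric factor, the integrand carries the ratio
\[
\frac{\Gamma\left(\frac{s}{\alpha}\right)\Gamma\left(1-\frac{s}{\alpha}\right)\Gamma(1-s)}{\Gamma\left(1-\frac{\beta}{\alpha}s\right)\Gamma(\tau s)\Gamma(1-\tau s)},
\]
and the self-similar variable that emerges is $|x|/t^{\beta/\alpha}$, not $|x|/t^{1/\alpha}$. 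This is the space-time analogue of the Mainardi--Luchko--Pagnini Green function: an $H$-function with an extra, $\beta$-dependent parameter pair. It collapses to the $H^{1,1}_{2,2}$ of (\ref{n2.4}) only when $\beta=1$, where $E_1=\exp$ and the factors $\Gamma(1-s/\alpha)$ cancel. So your concluding claim that the Gamma ratio ``is precisely the one defining $H^{1,1}_{2,2}$ with parameters $(1,1/\alpha),(1,\tau);(1,1),(1,\tau)$'' is false for $0<\beta<1$, and no amount of care with Fubini or contour deformation repairs it.

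The mismatch arises because the paper's proof is built on a different time kernel than yours. The paper applies the Laplace transform in $t$ jointly with the Fourier transform in $x$ and writes the transformed Green function as the simple pole (\ref{n2.6}), $\hat{\tilde G}(k,s)=1/\bigl(s+\eta_{\alpha,\beta}^{\theta}(k)\bigr)$, whose Laplace inversion is the plain exponential $\hat G(k,t)=e^{-\eta_{\alpha,\beta}^{\theta}(k)t}$ of (\ref{n2.8}); the order $\beta$ survives only inside the constant $(i\hslash)^{\beta}$, i.e.\ as the factor $\hslash^{\beta}$ and a phase. All of the paper's Mellin machinery (the scaling law (\ref{n2.7}), the cosine/sine splitting, the reflection formula) is applied to that exponential, which is exactly why (\ref{n2.4}) contains no $\beta$-dependent Gamma parameters and scales as $t^{-1/\alpha}$. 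Your Mittag-Leffler kernel coincides with this exponential kernel only at $\beta=1$; by Fourier uniqueness, (\ref{n2.4}) is the inverse transform of $e^{-\eta(k)t}$ and not of $E_\beta(\cdot\,t^{\beta})$. (Your computation in fact exposes a real tension: for a genuine Caputo derivative of order $\beta<1$, the Laplace transform yields $s^{\beta-1}/\bigl(s^{\beta}-(i\hslash)^{-\beta}C_\alpha\gamma_\alpha^\theta(k)\bigr)$ rather than the simple pole (\ref{n2.5}), so the Mittag-Leffler form is the natural solution of (\ref{n2.1}) as written.) But as a proof of the theorem as stated, your plan fails at the $H$-function identification; to arrive at (\ref{n2.4}) you must work from the exponential kernel of (\ref{n2.5})--(\ref{n2.8}), as the paper does.
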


\begin{proof}
Denote the Laplace transform $\mathcal{L}\left\{\psi(x,t);s\right\}$ of $\psi$ with respect to the time variable $t$ by $\tilde{\psi}(x,s)$ and the Fourier transform $\mathcal{F}\left\{\psi(x,t);k\right\}$ of $\psi$ with respect to the spatial variable $x$ by $\hat{\psi}(k,t)$. The application of the Laplace transform followed by the Fourier transform to the equation (\ref{n2.1}) leads by (\ref{n1.5}) and (\ref{n2.2}) to the formula 
\begin{equation}\label{n2.5}
\hat{\tilde{\psi{}}}\left(k,s\right)=\frac{\hat{f}(k)}{s+{\eta{}}_{\alpha{},\beta{}}^{\theta{}}(k)}
\end{equation} 
for the Laplace-Fourier transform of the solution $\psi$, where 
\[
{\eta{}}_{\alpha{},\beta{}}^{\theta{}}\left(k\right)=-\frac{C_{\alpha{}}}{h^{\beta{}}}{\left\vert{}k\right\vert{}}^{\alpha{}}e^{-\frac{i\pi{}}{2\alpha{}}\left[\beta{}-sign(k)\theta{}\right]}.
\]
The representation (\ref{n2.3}) then follows from (\ref{n2.5}) by using the inverse Laplace transform and the convolution theorem for the Fourier transform. 
The Green function $G(x,t)$ (fundamental solution) of the problem corresponds to the initial condition $f(x)=\delta(x)$ with the Dirac $\delta-$function and thus 
\begin{equation}\label{n2.6}
\hat{\tilde{G}}\left(k,s\right)=\frac{1}{s+{\eta{}}_{\alpha{},\beta{}}^{\theta{}}(k)}.
\end{equation}
By means of the scaling rules for the Fourier and for the Laplace transforms \cite{8}
\[
\mathcal{F}\left\{f\left(ax\right);k\right\}=\frac{1}{a}\hat{f}\left(\frac{k}{a}\right),a>0,~~\mathcal{L}\left\{f\left(bt\right);s\right\}=\frac{1}{b}\tilde{f}\left(\frac{s}{b}\right),b>0, 
\]
the following scaling property of the Green function can be deduced from (\ref{n2.6}):
\begin{equation}\label{n2.7}
G(x,t)=t^{-\frac{1}{\alpha}}{K}_{\alpha{},\beta{}}^{\theta{}}\left(\frac{x}{t^{\frac{1}{\alpha}}}\right),
\end{equation}
where the single variable function ${K}_{\alpha{},\beta{}}^{\theta{}}$ is to be determined. The application of the inverse Laplace transform to (\ref{n2.6}) implies \cite{16}
\begin{equation}\label{n2.8}
\hat{G}\left(k,t\right)=e^{-{\eta{}}_{\alpha{},\beta{}}^{\theta{}}(k)t},~~k\in{\mathbb{R}},t\geq{0}.
\end{equation}
Due to the symmetry property 
\[
{\eta{}}_{\alpha{},\beta{}}^{-\theta{}}\left(-k\right)={\eta{}}_{\alpha{},\beta{}}^{\theta{}}(k),
\]
the relation 
\begin{equation}\label{n2.9}
K_{\alpha{},\beta{}}^{\theta{}}\left(-x\right)=K_{\alpha{},\beta{}}^{-\theta{}}(x)
\end{equation}
holds and hence we can restrict our further considerations to the case $x\geq{0}$. 
Let us first consider the case $x>0$. Then from (\ref{n2.7}) and (\ref{n2.8}) it follows that 
\begin{multline*}
K_{\alpha{},\beta{}}^{\theta{}}\left(x\right)=G\left(x,1\right)=\frac{1}{2\pi{}}\int_{-\infty{}}^{\infty{}}e^{-ikx}e^{-{\eta{}}_{\alpha{},\beta{}}^{\theta{}}(k)}\
dk\
\\=\frac{1}{2\pi{}}\left\{\int_0^{\infty{}}e^{-ikx}e^{-{\eta}_{\alpha{},\beta{}}^{\theta{}}(k)}dk+\int_0^{\infty{}}e^{ikx}e^{-{\eta}_{\alpha{},\beta{}}^{-\theta{}}(k)}dk\right\}.
\end{multline*}
Thus, we see that 
\begin{equation}\label{n2.10}
K_{\alpha{},\beta{}}^{\theta{}}\left(x\right)=\frac{1}{2}\left[{{}_cK}_{\alpha{},\beta{}}^{\theta{}}\left(x\right)+{{}_sK}_{\alpha{},\beta{}}^{\theta{}}\left(x\right)\right],
\end{equation}
where
\[
{{}_cK}_{\alpha{},\beta{}}^{\theta{}}\left(x\right)=\frac{1}{\pi{}}\int_0^{\infty{}}\cos{(kx)}\left[e^{-{\eta}_{\alpha{},\beta{}}^{\theta{}}(k)}+e^{-{\eta}_{\alpha{},\beta{}}^{-\theta{}}(k)}\right]dk,
\]
\[
{{}_sK}_{\alpha{},\beta{}}^{\theta{}}\left(x\right)=\frac{-i}{\pi{}}\int_0^{\infty{}}\sin{(kx)}\left[e^{-{\eta{}}_{\alpha{},\beta{}}^{\theta{}}(k)}-e^{-{\eta{}}_{\alpha{},\beta{}}^{-\theta{}}(k)}\right]dk.
\]
To determine the functions ${{}_cK}_{\alpha{},\beta{}}^{\theta{}}$ and ${{}_sK}_{\alpha{},\beta{}}^{\theta{}}$, we employ technique of the Mellin integral transform. For a sufficiently well-behaved function $f$, the Mellin transform is defined by the formula 
\[
\mathcal{M}\left\{f\left(r\right);s\right\}=f^*\left(s\right)=\int_0^{\infty{}}f\left(r\right)\
r^{s-1}dr,~\gamma_{1}<\text{Re}(s)<\gamma_{2},
\]
and the inverse Mellin transform is given by 
\[
\mathcal{M}^{-1}\left\{f^*\left(s\right);r\right\}=f\left(r\right)=\frac{1}{2\pi{}i}\int_{\gamma{}-i\infty{}}^{\gamma{}+i\infty{}}{f^*\left(s\right)\,
r}^{-s}ds
\]
where $r>0,\gamma=\text{Re}(s),\gamma_{1}<\gamma<\gamma_{2}.$
Proceeding as in \cite{16}, by the Mellin convolution rule we get the representations 
\[
{{}_cK}_{\alpha{},\beta{}}^{\theta{}}\left(x\right)=\frac{1}{\pi{}x}\frac{1}{2\pi{}i}\int_{\gamma{}-i\infty{}}^{\gamma{}+i\infty{}}{\left[e^{-{\eta{}}_{\alpha{},\beta{}}^{\theta{}}\left(k\right)}+e^{-{\eta{}}_{\alpha{},\beta{}}^{-\theta{}}\left(k\right)}\right]}^*(s)\
\Gamma\left(1-s\right)\sin{\left(\frac{\pi{}s}{2}\right)x^s}ds,
\]
for $x>0$, $0<\gamma<1$ and 
\[
{{}_sK}_{\alpha{},\beta{}}^{\theta{}}\left(x\right)=\frac{-1}{\pi{}x}\frac{1}{2\pi{}}\int_{\gamma{}-i\infty{}}^{\gamma{}+i\infty{}}{\left[e^{-{\eta{}}_{\alpha{},\beta{}}^{\theta{}}\left(k\right)}-e^{-{\eta{}}_{\alpha{},\beta{}}^{-\theta{}}\left(k\right)}\right]}^*(s)\
\Gamma\left(1-s\right)\cos{\left(\frac{\pi{}s}{2}\right)x^s}ds,
\]
for $x>0$ and $0<\gamma<2$. The formula (\cite{16}, (6.1)) for the Mellin transform of the Mittag-Leffler function yields 
\[
{\left[e^{-{\eta{}}_{\alpha{},\beta{}}^{\theta{}}\left(k\right)}+e^{-{\eta{}}_{\alpha{},\beta{}}^{-\theta{}}\left(k\right)}\right]}^*\left(s\right)=\frac{2}{\alpha{}}\Gamma\left(\frac{s}{\alpha{}}\right){\left[\frac{-C_{\alpha{}}}{{\left(ih\right)}^{\beta{}}}\right]}^{\frac{-s}{\alpha{}}}\cos{\left(\frac{\pi{}\theta{}s}{2\alpha{}}\right)},
\]
\[
{\left[e^{-{\eta{}}_{\alpha{},\beta{}}^{\theta{}}\left(k\right)}-e^{-{\eta{}}_{\alpha{},\beta{}}^{-\theta{}}\left(k\right)}\right]}^*\left(s\right)=\frac{-2i}{\alpha{}}\Gamma\left(\frac{s}{\alpha{}}\right){\left[\frac{-C_{\alpha{}}}{{\left(ih\right)}^{\beta{}}}\right]}^{\frac{-s}{\alpha{}}}\sin{\left(\frac{\pi{}\theta{}s}{2\alpha{}}\right)},
\]
and hence
\[
\begin{split}
{{}_cK}_{\alpha{},\beta{}}^{\theta{}}\left(x\right)=\frac{1}{\alpha{}\pi{}x}\frac{1}{\pi{}i}\int_{\gamma{}-i\infty{}}^{\gamma{}+i\infty{}}\
\Gamma\left(\frac{s}{\alpha{}}\right)\Gamma\left(1-s\right)\times\\\cos{\left(\frac{\pi{}\theta{}s}{2\alpha{}}\right)}\
&\sin{\left(\frac{\pi{}s}{2}\right)}{\left[{\left(\frac{\hslash^{\beta{}}}{C_{\alpha{}}}\right)}^{\frac{1}{\alpha{}}}e^{-\frac{i\pi{}(2-\beta{})}{2\alpha{}}}x\right]}^s ds,
\end{split}
\]

\[
\begin{split}
{{}_sK}_{\alpha{},\beta{}}^{\theta{}}\left(x\right)=\frac{-1}{\alpha{}\pi{}x}\frac{1}{\pi{}i}\int_{\gamma{}-i\infty{}}^{\gamma{}+i\infty{}}\
\Gamma\left(\frac{s}{\alpha{}}\right)\Gamma\left(1-s\right)\times\\\sin{\left(\frac{\pi{}\theta{}s}{2\alpha{}}\right)}\
&\cos{\left(\frac{\pi{}s}{2}\right)}{\left[{\left(\frac{\hslash^{\beta{}}}{C_{\alpha{}}}\right)}^{\frac{1}{\alpha{}}}e^{-\frac{i\pi{}(2-\beta{})}{2\alpha{}}}x\right]}^s ds.
\end{split}
\]
Then from (\ref{n2.10}) it follows that
\[
\begin{split}
K_{\alpha{},\beta{}}^{\theta{}}\left(x\right)=\frac{1}{\alpha{}\pi{}x}\frac{1}{2\pi{}i}\int_{\gamma{}-i\infty{}}^{\gamma{}+i\infty{}}\
\Gamma\left(\frac{s}{\alpha{}}\right)\Gamma\left(1-s\right)\times\\\sin{\left(\frac{\left(\alpha{}-\theta{}\right)\pi{}s}{2\alpha{}}\right)}
&{\left[{\left(\frac{\hslash^{\beta{}}}{C_{\alpha{}}}\right)}^{\frac{1}{\alpha{}}}e^{-\frac{i\pi{}(2-\beta{})}{2\alpha{}}}x\right]}^s{ds}.
\end{split}
\]
The substitution $\tau_{1}=\frac{\alpha-\theta}{2\alpha}$ and the reflection formula 
\[
\Gamma(\tau_{1}s)\Gamma(1-\tau_{1}s)=\frac{\pi}{\sin(\tau_{1}\pi s)}
\]
lead to the representation 
\[
K_{\alpha{},\beta{}}^{\theta{}}\left(x\right)=\frac{1}{\alpha{}x}\frac{1}{2\pi{}i}\int_{\gamma{}-i\infty{}}^{\gamma{}+i\infty{}}\
\frac{\Gamma\left(\frac{s}{\alpha{}}\right)\Gamma\left(1-s\right)}{\Gamma\left({\tau{}}_1s\right)\Gamma\left(1-{\tau{}}_1s\right)}{\left[{\left(\frac{\hslash^{\beta{}}}{C_{\alpha{}}}\right)}^{\frac{1}{\alpha{}}}e^{-\frac{i\pi{}(2-\beta{})}{2\alpha{}}}x\right]}^sds.
\]
The last integral can be represented in terms of the Fox $H$-function (\cite{15}) as follows 
\begin{equation}\label{n2.11}
K_{\alpha{},\beta{}}^{\theta{}}\left(x\right)=\frac{1}{\alpha x}H_{2,2}^{1,1}\left[{\left(\frac{C_{\alpha{}}}{\hslash^{\beta{}}}\right)}^{\frac{1}{\alpha{}}}e^{\frac{i\pi{}\left(2-\beta{}\right)}{2\alpha{}}}
\frac{1}{x}\left|\begin{array}{
cc}
(0,1),(0,\tau_{1}) \\
(0,\frac{1}{\alpha}),(0,\tau_{1})
\end{array}\right.\right].
\end{equation}
By means of the reciprocal formula for the $H$-function (\cite{15}, (1.58)) and from (\ref{n2.11}) we get 
\[
K_{\alpha{},\beta{}}^{\theta{}}\left(x\right)=\frac{1}{\alpha x}H_{2,2}^{1,1}\left[{\left(\frac{\hslash^{\beta{}}}{C_{\alpha{}}}\right)}^{\frac{1}{\alpha{}}}e^{-\frac{i\pi{}\left(2-\beta{}\right)}{2\alpha{}}}\
x\left|\begin{array}{
cc}
(1,\frac{1}{\alpha}),(1,\tau_{1}) \\
(1,1),(1,\tau_{1})
\end{array}\right.\right],
\]
and thus according to (\ref{n2.7}) the Green function $G(x,t)$ takes the form
\begin{equation}\label{n2.12}
G(x,t)=\frac{1}{\alpha x}H_{2,2}^{1,1}\left[{\left(\frac{\hslash^{\beta{}}}{tC_{\alpha{}}}\right)}^{\frac{1}{\alpha{}}}e^{-\frac{i\pi{}\left(2-\beta{}\right)}{2\alpha{}}}\
x\left|\begin{array}{
cc}
(1,\frac{1}{\alpha}),(1,\tau_{1}) \\
(1,1),(1,\tau_{1})
\end{array}\right.\right].
\end{equation}
In the case $x<0$, we use the formulas (\ref{n2.9}) and (\ref{n2.12}) to obtain the representation 
\[
G(x,t)=\frac{1}{\alpha (-x)}H_{2,2}^{1,1}\left[{\left(\frac{\hslash^{\beta{}}}{tC_{\alpha{}}}\right)}^{\frac{1}{\alpha{}}}e^{-\frac{i\pi{}\left(2-\beta{}\right)}{2\alpha{}}}\
(-x)\left|\begin{array}{
cc}
(1,\frac{1}{\alpha}),(1,\tau_{2}) \\
(1,1),(1,\tau_{2})
\end{array}\right.\right],
\]
with $\tau_{2}=\frac{\alpha+\theta}{2\alpha}$.
Finally, in the case $x=0$, the representation (\ref{n2.10}) takes the form 
\[
K_{\alpha{},\beta{}}^{\theta{}}(0)=\frac{1}{2\pi}\int^{\infty}_{0}{\left[e^{-{\eta{}}_{\alpha{},\beta{}}^{\theta{}}\left(k\right)}+e^{-{\eta{}}_{\alpha{},\beta{}}^{-\theta{}}\left(k\right)}\right]}dk.
\]
The right-hand side of the last formula can be interpreted as the Mellin transform of the function $\frac{1}{2\pi}\left[e^{-{\eta{}}_{\alpha{},\beta{}}^{\theta{}}\left(k\right)}+e^{-{\eta{}}_{\alpha{},\beta{}}^{-\theta{}}\left(k\right)}\right]$ at the point $s=1$ and thus 
\[
K_{\alpha{},\beta{}}^{\theta{}}(0)=\frac{1}{\alpha\pi}{\left(\frac{\hslash^{\beta{}}}{C_{\alpha{}}}\right)}^{\frac{1}{\alpha{}}}\Gamma\left(\frac{1}{\alpha}\right)\cos\left(\frac{\pi\theta}{2\alpha}\right)e^{-\frac{i\pi{}\left(2-\beta{}\right)}{2\alpha{}}}.
\]
Hence, by (\ref{n2.7}) we get 
\[
G(0,t)=\frac{1}{\alpha\pi}{\left(\frac{\hslash^{\beta{}}}{tC_{\alpha{}}}\right)}^{\frac{1}{\alpha{}}}\Gamma\left(\frac{1}{\alpha}\right)\cos\left(\frac{\pi\theta}{2\alpha}\right)e^{-\frac{i\pi{}\left(2-\beta{}\right)}{2\alpha{}}}
\]
that completes the proof of the theorem.
\end{proof}

\begin{remark}
In the case $\beta=1$ and $\theta=0$, the Schr\"{o}dinger equation (\ref{n2.1}) reduces to the fractional Schr\"{o}dinger equation with the quantum Riesz derivative that has been already studied. In particular, the fractional Schr\"{o}dinger equation with the quantum Riesz derivative was solved for a free particle and its fundamental solution was obtained in the form \cite{14}
\begin{equation}\label{wazzup}
G(x,t)=\frac{1}{\alpha \vert{x}\vert{}}H_{2,2}^{1,1}\left[{\left(\frac{\hslash}{itC_{\alpha{}}}\right)}^{\frac{1}{\alpha{}}}\
\vert{x\vert}\left|\begin{array}{
cc}
(1,\frac{1}{\alpha}),(1,\frac{1}{2}) \\
(1,1),(1,\frac{1}{2})
\end{array}\right.\right]
\end{equation}
that evidently appears as a particular case of (\ref{n2.4}) when taking into consideration that $-i=e^{\frac{-i\pi}{2}}$.
\end{remark}

\begin{remark}
In the case $\alpha=2$, $\beta=1$ and $\theta=0$, the equation (\ref{wazzup}) reduces to
\[
G(x,t)=\frac{1}{2\vert{x}\vert{}}\frac{1}{2\pi{}i}\int_{\gamma{}-i\infty{}}^{\gamma{}+i\infty{}}\
\frac{\Gamma(1-s)}{\Gamma\left(1-\frac{s}{2}\right)} \xi^{s} ds=\frac{1}{2}M_{\frac{1}{2}}\left(\xi\right),
\]
where $\xi={\left(\frac{\hslash}{itC_{2}}\right)}^{\frac{1}{2}}\vert{x\vert}$ and the function $M_{\nu}(z)$ is defined for any order $\nu, 0<\nu<1$, and $z\in{\mathbb{C}}$ as 
\[
M_{\nu}(z)=\sum_{n=0}^{\infty}\frac{(-z)^{n}}{n!\Gamma(-\nu n+(1-\nu))}=\frac{1}{\pi}\sum_{n=1}^{\infty}\frac{(-z)^{n-1}}{(n-1)!} \Gamma(\nu n)\sin(\nu n \pi).
\]
As a special case of the Wright function, $M_{\nu}(z)$ is an entire function of order $\rho=\frac{1}{1-\nu}$ and provides a generalization of the Gaussian and the Airy functions. In particular, 
\[
M_{\frac{1}{2}}(z)=\frac{1}{\sqrt{\pi}}\text{exp}\left(-\frac{z^{2}}{4}\right),~~M_{\frac{1}{3}}=3^{\frac{2}{3}}\text{Ai}\left(\frac{z}{\sqrt[3]{3}}\right). 
\]
Therefore, we have
\[
G(x,t)=\frac{1}{2\sqrt{\pi}}\text{exp}\left(-\frac{\xi^{2}}{4}\right).
\]
\end{remark}

\section{Linear Potential}
Consider a particle in a linear potential field
\[
V(x)=\left\{\begin{array}{ll}
Ax, & x\geq{0}\left(A>0\right) \\
\infty, & x<0
\end{array} \right..
\]
Then by assuming that $\psi(x,t)=f(t)\phi(t)$ and the method of separation of variables, the equation (\ref{n1.6}) reduces to the space fractional equation 
\begin{equation}\label{n3.1}
C_{\alpha{}}D_{\theta{}}^{\alpha{}}\phi(x)+Ax\phi(x)=E\phi(x),~x\geq{}0\ .
\end{equation} 
and the time fractional equation 
\begin{equation}\label{n3.2}
\left(i\hslash\right)^{\beta}{^{C}_0D}^{\beta}_{t}f(t)=Ef(t),
\end{equation} 
where $E$ refers to the energy. 

\subsection{Space fractional equation}\label{subsec:3.1}
\begin{theorem}
If $ 1<\alpha\leq{2} $ and $ \left| \theta \right| \leq{2-\alpha} $, then the solution of the equation (\ref{n3.1})) has the form
\begin{multline*}
\,\,\,\phi(x)=\\
\frac{2\pi{}N}{\left(\alpha{}+1\right)}H_{2,2}^{1,1}\left[\left(x-\frac{E}{A}\right)\frac{1}{\hslash{}}{\left(\frac{C_{\alpha{}}}{\hslash{}A\left(\alpha{}+1\right)}\right)}^{\frac{-1}{\alpha{}+1}}\left|\begin{array}{
ll}
\left(\frac{\alpha{}}{\alpha{}+1},\frac{1}{1+\alpha{}}\right),\left(\frac{2+\alpha{}-\theta{}}{2\left(\alpha{}+1\right)},\frac{\alpha{}+\theta{}}{2\left(\alpha{}+1\right)}\right)
\\
\left(0,1\right),\left(\frac{2+\alpha{}-\theta{}}{2\left(\alpha{}+1\right)},\frac{\alpha{}+\theta{}}{2\left(\alpha{}+1\right)}\right)
\end{array}\right.\right],
\end{multline*}
where
\[
N=\frac{1}{2\pi{}\hslash{}}{\left(\frac{C_{\alpha{}}}{A\hslash{}\left(\alpha{}+1\right)}\right)}^{\frac{-1}{(\alpha{}+1)}}.
\]
\end{theorem}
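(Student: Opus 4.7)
The natural route, paralleling the treatment of the free-particle case, is to push equation (\ref{n3.1}) into Fourier space, solve a first-order ODE in the dual variable $k$, and then invert via a Mellin-Barnes contour, recognizing the resulting integral as a Fox $H$-function.

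First I would apply the Fourier transform in $x$ to (\ref{n3.1}). By (\ref{n1.5}) the Riesz-Feller operator becomes multiplication by $\gamma_\alpha^\theta(k)=|k|^\alpha e^{i\,\mathrm{sgn}(k)\theta\pi/2}$, while under the convention used in the paper the multiplier $x$ is converted into $i\,d/dk$. The transformed equation is therefore the first-order linear ODE
\[
iA\,\frac{d\hat\phi}{dk}=\bigl[E-C_\alpha\gamma_\alpha^\theta(k)\bigr]\hat\phi(k),
\]
whose solution is obtained by a single quadrature,
\[
\hat\phi(k)=N\exp\!\left(\frac{i}{A}\int_0^k\!\bigl[C_\alpha\gamma_\alpha^\theta(k')-E\bigr]dk'\right)
=N\,e^{-iEk/A}\exp\!\left(\frac{iC_\alpha\,\mathrm{sgn}(k)\,|k|^{\alpha+1}}{A(\alpha+1)}e^{i\,\mathrm{sgn}(k)\theta\pi/2}\right),
\]
where the integral of $\gamma_\alpha^\theta$ is computed by splitting the sign of $k$, and $N$ is the normalization constant stated in the theorem.

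Next I would invert the Fourier transform. After the translation $y=x-E/A$, the representation formula for $\phi$ becomes
\[
\phi(x)=N\!\int_{-\infty}^{\infty}\! e^{ik y}\exp\!\left(\frac{iC_\alpha\,\mathrm{sgn}(k)|k|^{\alpha+1}}{A(\alpha+1)}e^{i\,\mathrm{sgn}(k)\theta\pi/2}\right)dk,
\]
which I would split into the integrals over $k>0$ and $k<0$, as in the derivation of (\ref{n2.10}). Each of the two resulting one-sided integrals can be cast, through the Mellin convolution rule (exactly as in the argument producing $_cK$ and $_sK$ in the proof of Theorem~1.1), as a Mellin-Barnes contour integral involving $\Gamma(s/(\alpha+1))$, $\Gamma(1-s)$, and trigonometric factors in $s$ that carry the $\theta$-dependence. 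Combining the two pieces and rewriting the trigonometric factor $\sin((\alpha+\theta)\pi s/(2(\alpha+1)))/\pi$ as a ratio of Gammas via the reflection formula (with the substitution $\tau=(\alpha+\theta)/(2(\alpha+1))$ playing the role of $\tau_1$ in the free-particle proof) produces exactly the Mellin-Barnes representation of an $H_{2,2}^{1,1}$-function. Finally, applying the reciprocal formula for the $H$-function (\cite{15}, (1.58)), as was done to pass from (\ref{n2.11}) to (\ref{n2.12}), puts the argument in the form stated in the theorem.

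The routine part is the ODE solution and the bookkeeping of Mellin pairs. The main obstacle I anticipate is twofold: justifying the Fourier manipulations in the presence of the half-line constraint $x\ge 0$ (the potential is infinite for $x<0$, so one must argue that the transform still captures the correct solution, or equivalently that the boundary values are absorbed into $N$), and keeping track of the sign and phase factors $e^{\pm i\,\mathrm{sgn}(k)\theta\pi/2}$ through the split integrals so that the trigonometric combinations collapse cleanly into a single $\Gamma$-ratio with parameter $\tau=(\alpha+\theta)/(2(\alpha+1))$. Once that algebraic simplification is executed, identification with the Fox $H$-function and determination of $N$ by comparing with the Fourier inversion formula at $x=E/A$ complete the proof.
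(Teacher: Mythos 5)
Your proposal follows essentially the same route as the paper's proof: the Fourier transform in momentum representation turns multiplication by $x$ into a derivative in the dual variable, the resulting first-order ODE is solved by quadrature, the inverse transform is split over positive and negative frequencies, each piece is converted by Mellin-transform techniques into a Mellin--Barnes integral, the trigonometric factors are collapsed into Gamma ratios by the reflection-type identity, and the result is identified as an $H_{2,2}^{1,1}$-function. The only cosmetic differences are that the paper's bookkeeping (using the cosine-to-Gamma formula on the sum $\tilde{\phi}_1+\tilde{\phi}_2$) lands directly on the stated $H$-function without invoking the reciprocal formula, and the constant $N$ arises there as the Jacobian factor of the scaling substitution in the inverse Fourier integral rather than from a separate normalization argument.
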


\begin{proof}
According to (\ref{n1.5}) and the formula
\[
\mathcal{F}\
\left\{x\phi(x);p\right\}=i\hslash{}\frac{d}{dp}\hat{\phi}(p),
\]
the application of the Fourier transform in momentum representation, 
\[
\hat{\phi}(p)=\frac{1}{2\pi{}\hslash{}}\int_{-\infty{}}^{\infty{}}e^{-ipx/\hslash{}}\
\phi(p)\ dx ~~\text{with}~~
\phi(x)=\frac{1}{2\pi\hslash{}}\int_{-\infty{}}^{\infty{}}e^{ipx/\hslash{}}\
\hat{\phi}(p)\ dp,
\]
to the equation (\ref{n3.1}) leads to
\[
\frac{d\hat{\phi}(p)}{\hat{\phi}(p)}=\frac{1}{Ai\hslash{}}\left(E-C_{\alpha{}}{\left\vert{}p\right\vert{}}^{\alpha{}}e^{iSgn(p)\theta{}\pi/2}\right),
\]
from where it follows readily that (omitting the constant of the integration)
\[
\hat{\phi}(p)=\left\{\begin{array}{l}\begin{array}{
ll}
\exp\left[\frac{-i}{A\hslash{}}\left(Ep-\frac{C_{\alpha{}}}{\alpha{}+1}p^{\alpha{}+1}e^{i\theta{}\pi/2}\right)\
\right]; & p>0
\end{array} \\
\begin{array}{
ll}
\exp\left[\frac{-i}{A\hslash{}}\left(Ep+\frac{C_{\alpha{}}}{\alpha{}+1}{\left\vert{}p\right\vert{}}^{\alpha{}+1}e^{-i\theta{}\pi/2}\right)\
\right]; &  p<0
\end{array}\end{array}.\right.
\]
Setting
\begin{equation}\label{n3.3}
w=p{\left(\frac{C_{\alpha{}}}{A\hslash{}\left(\alpha{}+1\right)}\right)}^{\frac{1}{(\alpha{}+1)}},~~y=\frac{1}{\hslash{}}\left(x-\frac{E}{A}\right){\left(\frac{C_{\alpha{}}}{A\hslash{}\left(\alpha{}+1\right)}\right)}^{\frac{-1}{(\alpha{}+1)}},
\end{equation}
it is possible by the application of the inverse Fourier transform to get that
\begin{equation}\label{n3.4}
\phi(y)=N\left\{{\phi{}}_1(y)+{\phi{}}_2(y)\right\},
\end{equation}
where
\[
{\phi}_1(y)=\int_0^{\infty{}}e^{iyw}e^{i{e^{i\theta\pi/2}w}^{\alpha{}+1}}\
\ dw,
\]
and
\[
{\phi}_2(y)=\int_{-\infty{}}^0\
e^{iyw}e^{-i{e^{-i\theta\pi/2}\left\vert{}w\right\vert{}}^{\alpha{}+1}}\
dw.
\]
Denote by $ \hat{\phi}(s)=\mathcal{M}\left\{\phi(y); s\right\} $ the Mellin transform of $ \phi(y) $. If $\rho\in{\mathbb{C}}$, from the formula (\cite[Ch.8]{8}) it follows 
\begin{equation}\label{n3.5}
\mathcal{M}\left\{e^{-i\rho{x}};s\right\}=\left(i\rho\right)^{-s}\Gamma(s)={\rho}^{-s}\Gamma(s)\left(\cos\left(\frac{\pi{s}}{2}\right)-i\sin\left(\frac{\pi{s}}{2}\right)\right),
\end{equation}
it follows that
\begin{equation}\label{n3.6}
{\tilde{\phi}}_1(s)={\left(-i\right)}^{-s}\Gamma\left(s\right)\int_0^{\infty{}}{e}^{ie^{i\theta\pi/2}w^{\alpha{}+1}}w^{-s}dw,
\end{equation}
and
\begin{equation}\label{n3.7}
{\tilde{\phi}}_2(s)={\left(-i\right)}^{-s}\Gamma\left(s\right)\int_{-\infty{}}^0e^{-ie^{-i\theta\pi/2}{\left\vert{}w\right\vert{}}^{\alpha{}+1}}w^{-s}dw.
\end{equation}

Now by the formula (\ref{n3.5}), the substitution $ u={w}^{\alpha+1} $ in (\ref{n3.6}) and the substitutions $ u=-w $, $ \xi={u}^{\alpha+1} $ in (\ref{n3.7}), it can be seen that
\[
{\tilde{\phi}}_1(s)=\frac{1}{\alpha{}+1}\Gamma\left(\frac{1-s}{1+\alpha{}}\right)\Gamma\left(s\right)\exp\left(\frac{i\pi{}}{2}\left[\frac{1-\theta{}}{\alpha{}+1}+\frac{\left(\alpha{}+\theta{}\right)s}{\alpha{}+1}\right]\right),
\]
and
\[
{\tilde{\phi}}_2(s)=\frac{1}{\alpha{}+1}\Gamma\left(\frac{1-s}{1+\alpha{}}\right)\Gamma\left(s\right)\exp\left(\frac{-i\pi{}}{2}\left[\frac{1-\theta{}}{\alpha{}+1}+\frac{\left(\alpha{}+\theta{}\right)s}{\alpha{}+1}\right]\right).
\]
These representations and the formula
\[
\cos\left(\pi{z}/2\right)=\frac{\pi{}}{\Gamma\left(\frac{1+z}{2}\right)\Gamma\left(\frac{1-z}{2}\right)}
\]
allow from (\ref{n3.4}) the following representation to be obtained 
\[
\tilde{\phi}(s)=\frac{2\pi{}N}{\left(\alpha{}+1\right)}\frac{\Gamma\left(s\right)\Gamma\left(\frac{1-s}{1+\alpha{}}\right)}{\Gamma\left(\frac{\alpha{}+\theta{}-\alpha{}s-\theta{}s}{2\left(\alpha{}+1\right)}\right)\Gamma\left(\frac{2+\alpha{}-\theta{}+\alpha{}s+\theta{}s}{2\left(\alpha{}+1\right)}\right)}.
\]
Therefore
\[
\phi(y)=\frac{2\pi{}N}{\left(\alpha{}+1\right)}\frac{1}{2\pi{}i}\int_{\gamma{}-i\infty{}}^{\gamma{}+i\infty{}}\
\frac{\Gamma\left(s\right)\Gamma\left(\frac{1-s}{1+\alpha{}}\right)}{\Gamma\left(\frac{\alpha{}+\theta{}-\alpha{}s-\theta{}s}{2\left(\alpha{}+1\right)}\right)\Gamma\left(\frac{2+\alpha{}-\theta{}+\alpha{}s+\theta{}s}{2\left(\alpha{}+1\right)}\right)}y^{-s}ds,
\]
and hence (\cite[Sec. 1.2]{15}),
\[
\phi(y)=\frac{2\pi{}N}{\left(\alpha{}+1\right)}H_{2,2}^{1,1}\left(y\left|\begin{array}{
ll}
\left(\frac{\alpha{}}{\alpha{}+1},\frac{1}{1+\alpha{}}\right),\left(\frac{2+\alpha{}-\theta{}}{2\left(\alpha{}+1\right)},\frac{\alpha{}+\theta{}}{2\left(\alpha{}+1\right)}\right)
\\
\left(0,1\right),\left(\frac{2+\alpha{}-\theta{}}{2\left(\alpha{}+1\right)},\frac{\alpha{}+\theta{}}{2\left(\alpha{}+1\right)}\right)
\end{array}\right.\right).
\]
Taking into account the substitutions (\ref{n3.3}), the validity of the theorem follows immediately.
\end{proof}

\begin{cor}(\cite{6})
If $ 1<\alpha\leq{2} $ and $ \theta=0 $, the solution of the equation (\ref{n3.1}) has the form
\begin{multline*}
\,\,\,\phi(x)=\\
\frac{2\pi{}N}{\left(\alpha{}+1\right)}H_{2,2}^{1,1}\left[\left(x-\frac{E}{A}\right)\frac{1}{\hslash{}}{\left(\frac{D_{\alpha{}}}{\hslash{}A\left(\alpha{}+1\right)}\right)}^{\frac{-1}{\alpha{}+1}}\left|\begin{array}{
ll}
\left(\frac{\alpha{}}{\alpha{}+1},\frac{1}{1+\alpha{}}\right),\left(\frac{2+\alpha{}}{2\left(\alpha{}+1\right)},\frac{\alpha{}}{2\left(\alpha{}+1\right)}\right)
\\
\left(0,1\right),\left(\frac{2+\alpha{}}{2\left(\alpha{}+1\right)},\frac{\alpha{}}{2\left(\alpha{}+1\right)}\right)
\end{array}\right.\right],
\end{multline*}
where ${D}_{\alpha{}}$ is the generalized fractional diffusion coefficient with physical dimension $\left[D_{\alpha{}}\right]={erg}^{1-\alpha{}}\times{}{cm}^{\alpha{}}\times{}{sec}^{-\alpha{}}$ ($D_{\alpha{}}$=$\frac{1}{2m}$ for $\alpha{}=2$).
\end{cor}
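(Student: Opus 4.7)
The plan is to obtain this corollary as a direct specialization of the preceding theorem in Subsection~\ref{subsec:3.1} to the symmetric case $\theta=0$, since the corollary imposes exactly the same hypotheses on $\alpha$ together with the additional restriction $\theta=0$ (which of course satisfies $|\theta|\le 2-\alpha$ for $1<\alpha\le 2$). No new analytic machinery is needed: the Fourier--Mellin derivation already carried out for general $\theta$ produces the Fox $H$-function representation, and the corollary amounts to reading off what that representation becomes when the skewness vanishes.

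Concretely, I would proceed in three short steps. First, I would substitute $\theta=0$ in the pair of $H$-function parameters involving $\theta$ in the theorem; the lower and upper entries
\[
\Bigl(\tfrac{2+\alpha-\theta}{2(\alpha+1)},\tfrac{\alpha+\theta}{2(\alpha+1)}\Bigr)
\]
collapse to $\bigl(\tfrac{2+\alpha}{2(\alpha+1)},\tfrac{\alpha}{2(\alpha+1)}\bigr)$, while the first upper pair $\bigl(\tfrac{\alpha}{\alpha+1},\tfrac{1}{1+\alpha}\bigr)$ and the first lower pair $(0,1)$ are independent of $\theta$ and remain unchanged. Second, I would check that setting $\theta=0$ in the derivation leaves the argument of the $H$-function, together with the prefactor and the normalization constant $N$, formally unaltered aside from the $\theta$-dependent parameter entries just discussed. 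Third, I would note that the constant $C_\alpha$ appearing in the theorem is renamed $D_\alpha$ here merely to conform with the notation of \cite{6}, where $D_\alpha$ is identified as the generalized fractional diffusion coefficient with the indicated physical dimensions (and reduces to $\tfrac{1}{2m}$ when $\alpha=2$).

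The only point that requires any care is the parameter bookkeeping inside the $H$-function, because in the general theorem the $\theta$-dependence enters both through the second upper entry $\bigl(\tfrac{2+\alpha-\theta}{2(\alpha+1)},\tfrac{\alpha+\theta}{2(\alpha+1)}\bigr)$ and through the second lower entry of the same form. One should verify that these two entries, which appeared originally via the reflection/reciprocal identities applied to the Mellin integrand, both simplify consistently to $\bigl(\tfrac{2+\alpha}{2(\alpha+1)},\tfrac{\alpha}{2(\alpha+1)}\bigr)$; this is the main (and essentially only) obstacle, and it is immediate by inspection. Assembling these observations and making the notational change $C_\alpha\mapsto D_\alpha$ yields exactly the stated $H$-function expression for $\phi(x)$, which completes the proof of the corollary.
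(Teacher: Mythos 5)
Your proposal is correct and matches the paper's (implicit) argument: the corollary is stated without separate proof precisely because it is the direct specialization $\theta=0$ of the preceding theorem, with the parameter pairs $\bigl(\tfrac{2+\alpha-\theta}{2(\alpha+1)},\tfrac{\alpha+\theta}{2(\alpha+1)}\bigr)$ collapsing as you describe and $C_\alpha$ renamed $D_\alpha$ to match the notation of the cited reference. Your observation that $\theta=0$ automatically satisfies the admissibility condition $|\theta|\le 2-\alpha$ for all $1<\alpha\le 2$ is the only hypothesis check needed, and you made it.
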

\begin{cor}(\cite{12})
If $ \alpha=2 $ and $ \theta=0 $, the standard wave function solution of the equation (\ref{n3.1}) has the form
\[
\phi(x)=\frac{\lambda}{\pi{}}\sum_{k=0}^{\infty{}}\Gamma\left(\frac{k+1}{3}\right)\sin{\left(\frac{2\left(k+1\right)\pi{}}{3}\right)}\frac{1}{k!}{\left(3^{\frac{1}{3}}u\right)}^k,
\]
where $ \lambda $ is a constant and $ u=\left(x-\left(\frac{E}{A}\right)\right){\left(\frac{2mA}{{h}^2}\right)}^{\frac{1}{3}}. $
\end{cor}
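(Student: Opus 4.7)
The strategy is to reuse the Mellin-Barnes representation obtained in the proof of the preceding theorem rather than to simplify the Fox $H$-function directly. Substituting $\alpha=2$ and $\theta=0$ into the integrand
\[
\frac{\Gamma(s)\,\Gamma\left(\tfrac{1-s}{1+\alpha}\right)}{\Gamma\left(\tfrac{\alpha+\theta-\alpha s-\theta s}{2(\alpha+1)}\right)\Gamma\left(\tfrac{2+\alpha-\theta+\alpha s+\theta s}{2(\alpha+1)}\right)},
\]
one observes that the arguments of the numerator factor $\Gamma\left(\tfrac{1-s}{3}\right)$ and of the first denominator factor $\Gamma\left(\tfrac{2-2s}{6}\right)$ coincide, producing a clean cancellation. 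The Mellin-Barnes representation for $\phi$ thus collapses to
\[
\phi(y)=\frac{2\pi N}{3}\cdot\frac{1}{2\pi i}\int_{\gamma-i\infty}^{\gamma+i\infty}\frac{\Gamma(s)}{\Gamma\left(\tfrac{2+s}{3}\right)}\,y^{-s}\,ds,
\]
whose integrand is meromorphic with poles only at $s=-k$, $k=0,1,2,\ldots$, arising from $\Gamma(s)$ (since $1/\Gamma$ is entire).

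The next step is to evaluate this integral by residues. Closing the contour to the left and using $\mathrm{Res}_{s=-k}\Gamma(s)=(-1)^{k}/k!$ yields
\[
\phi(y)=\frac{2N}{3}\sum_{k=0}^{\infty}\frac{(-1)^{k}}{k!}\,\frac{y^{k}}{\Gamma\left(\tfrac{2-k}{3}\right)}.
\]
To recast this in the stated form, I invoke Euler's reflection formula to write $\frac{1}{\Gamma\left(\tfrac{2-k}{3}\right)}=\frac{\Gamma\left(\tfrac{k+1}{3}\right)}{\pi}\sin\left(\tfrac{(2-k)\pi}{3}\right)$, and then absorb the alternating sign via the elementary identity $(-1)^{k}\sin\left(\tfrac{(2-k)\pi}{3}\right)=\sin\left(\tfrac{2(k+1)\pi}{3}\right)$, which follows at once from $\sin(x+k\pi)=(-1)^{k}\sin(x)$ together with the observation that $\tfrac{2(k+1)\pi}{3}-\tfrac{(2-k)\pi}{3}=k\pi$.

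Finally, inserting the scaling (\ref{n3.3}) with $C_{\alpha}=\tfrac{1}{2m}$ and $\alpha+1=3$ shows that $y=3^{1/3}u$ for $u$ as defined in the statement, while the prefactor $\tfrac{2\pi N}{3}$ reduces to a constant of the form $\lambda/\pi$ with $\lambda$ depending on $\hslash$, $m$, and $A$. The main obstacle, such as it is, lies in verifying the trigonometric identity cleanly and matching the overall normalization constant; the rest is routine gamma-function bookkeeping.
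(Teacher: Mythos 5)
Your proposal is correct and takes essentially the route the paper intends: the corollary is stated without explicit proof as the specialization $\alpha=2$, $\theta=0$ of Theorem 3.1, and your steps --- cancelling $\Gamma\left(\tfrac{1-s}{3}\right)$ against $\Gamma\left(\tfrac{2-2s}{6}\right)$ in the Mellin--Barnes integrand, summing residues of $\Gamma(s)$ at $s=-k$, and converting $1/\Gamma\left(\tfrac{2-k}{3}\right)$ via the reflection formula together with the correct momentum-representation value $C_{2}=\tfrac{1}{2m}$ --- supply exactly the details the paper leaves implicit. The only blemishes are immaterial: your prefactor $\tfrac{2N}{3}$ should read $\tfrac{2\pi N}{3}$ (the dropped $\pi$ is absorbed into the arbitrary constant $\lambda$), and closing the contour to the left deserves the standard one-line justification (here $\mu = 1-\tfrac{1}{3}>0$, so the left residue series converges to the integral for all $y\neq 0$).
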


\subsection{Time fractional equation}\label{subsec:3.2}
In this subsection, we reconsider and revise some results obtained in \cite{3} due to some contradictory computations. For instance, the parameters $A_{j}, j=1,..,p$ and $B_{j}, j=1,...q$ of the Fox $H$-function were taken in the cited paper as negative values, whereas they must be positive \cite{15}. We also noticed a mistake in the sign of one of the Gamma functions arguments in the equation (\cite{3}, (D13)). 
\begin{theorem}
If \,$0<\beta{}<1$ and $E>0$, the solution of the time fractional equation (\ref{n3.2}) is of the form
\[
\
f\left(t\right)=f\left(0\right)\left[\frac{e^{-iE^{\frac{1}{\beta{}}}t/\hslash{}}}{\beta{}}-\frac{1}{\beta{}\rho{}t^{\beta{}}}H_{2,3}^{2,1}\left({\left(-\rho{}\right)}^{\frac{1}{\beta{}}}t\left|\begin{array}{
ll}
\left(1,1/\beta\right),\left(\beta{},1\right) \\
\left(\beta{},1\right),\left(1,1/\beta\right),(\beta{},1)
\end{array}\right)\right.\right],
\]
with $\rho=\left(i\hslash\right)^{\beta}E$. In case $E<0$, the solution of (\ref{n3.2})
\[
f\left(t\right)=-\frac{f\left(0\right)}{\beta{}\rho{}t^{\beta{}}}H_{2,3}^{2,1}\left({\left(-\rho{}\right)}^{\frac{1}{\beta{}}}t\left|\begin{array}{
ll}
\left(1,1/\beta\right),\left(\beta{},1\right) \\
\left(\beta{},1\right),\left(1,1/\beta\right),\left(\beta{},1\right)
\end{array}\right.\right),
\]
in case $0<\beta{}\leq{}\frac{2}{3}$, or
\[
f\left(t\right)=f\left(0\right)\left[\frac{e^{-it{\left\vert{}E\right\vert{}}^{\frac{1}{\beta{}}}e^{\frac{\pi{}i}{\beta{}}}/\hslash{}}}{\beta{}}-\right.\left.\frac{1}{\beta{}\rho{}t^{\beta{}}}H_{2,3}^{2,1}\left({\left(-\rho{}\right)}^{\frac{1}{\beta{}}}t\left|\begin{array}{
ll}
\left(1,1/\beta\right),\left(\beta{},1\right) \\
\left(\beta{},1\right),\left(1,1/\beta\right),\left(\beta{},1\right)
\end{array}\right.\right)\right],
\]
when $\frac{2}{3}<\beta{}<1$.
\end{theorem}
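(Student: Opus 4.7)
My approach is to solve the Caputo fractional ODE (\ref{n3.2}) by the Laplace transform, then invert the resulting expression via a Hankel-loop contour encircling the branch cut of $s^{\beta}$, picking up any residues at the poles of the integrand that are captured by the deformation.

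Applying the Laplace transform and using the standard identity
\[
\mathcal{L}\{{^{C}_0D}^{\beta}_{t}f\}(s) = s^{\beta}\tilde{f}(s) - s^{\beta-1}f(0),
\]
equation (\ref{n3.2}) yields
\[
\tilde{f}(s) = f(0)\,\frac{s^{\beta-1}}{s^{\beta} - \mu},\qquad \mu := \frac{E}{(i\hslash)^{\beta}}.
\]
The Bromwich inversion, after deforming onto a Hankel contour $Ha$ wrapped around the principal branch cut $(-\infty,0]$, gives
\[
f(t) = \frac{f(0)}{2\pi i}\int_{Ha}\frac{s^{\beta-1}e^{st}}{s^{\beta}-\mu}\,ds \;+\; f(0)\sum_{s_\ast}\frac{e^{s_\ast t}}{\beta},
\]
the sum running over the poles $s_{\ast}$ enclosed by the deformation; the residue is $e^{s_{\ast}t}/\beta$ because the derivative of $s^{\beta}-\mu$ at $s_{\ast}$ equals $\beta s_{\ast}^{\beta-1}$.

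The principal-sheet root of $s^{\beta}=\mu$ is $s_{\ast}=|\mu|^{1/\beta}e^{i(\arg\mu)/\beta}$, and it is enclosed precisely when $|\arg\mu|<\beta\pi$. For $E>0$ one has $\arg\mu=-\beta\pi/2$, which is always within this range, so $s_{\ast}=-iE^{1/\beta}/\hslash$ contributes the exponential $\tfrac{1}{\beta}e^{-iE^{1/\beta}t/\hslash}$. For $E<0$ one has $\arg\mu=\pi-\beta\pi/2$, and the condition $|\arg\mu|<\beta\pi$ reduces to $\beta>2/3$; in that range the enclosed root produces the exponential $\tfrac{1}{\beta}e^{-it|E|^{1/\beta}e^{i\pi/\beta}/\hslash}$, while for $0<\beta\leq 2/3$ no residue contributes. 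Rescaling the Hankel-loop integral by $(-\rho)^{1/\beta}t$, with $\rho=(i\hslash)^{\beta}E$, and recasting it as a Mellin--Barnes integral identifies it with the Fox function $H^{2,1}_{2,3}$ having the parameter data listed in the theorem; combined with the residues, this produces the three piecewise formulas.

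The principal difficulty is the phase bookkeeping of $\mu=E/(i\hslash)^{\beta}$: whether $s_{\ast}$ lies on the principal sheet is governed by a strict inequality on $\arg\mu/\beta$, and the $\beta=2/3$ threshold is exactly the point where, for $E<0$, the root $s_{\ast}$ crosses onto the branch cut. A secondary technical task is to verify that the Mellin--Barnes representation of the Hankel-loop integral matches, parameter by parameter, the $H^{2,1}_{2,3}$-form of the theorem---including the repeated $(\beta,1)$ pair in both the upper and lower parameter lists---which is also the locus of the sign and parameter errors in \cite{3} that the authors flag as needing correction.
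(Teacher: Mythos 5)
Your proposal is correct and takes essentially the same route as the paper's own proof: Laplace transform giving $F(s)=f(0)\,s^{\beta-1}/\bigl(s^{\beta}-(i\hslash)^{-\beta}E\bigr)$, Bromwich inversion deformed around the branch cut of $s^{\beta}$, residues $e^{s_{*}t}/\beta$ with the identical enclosure analysis (one pole for $E>0$; for $E<0$ a pole only when $\beta>2/3$, with $\beta=2/3$ the cut-crossing threshold), and identification of the branch-cut contribution with the $H^{2,1}_{2,3}$-function. The one step you leave unexecuted---turning the collapsed branch-cut integral into that $H$-function via the Mellin transform of the kernel $1/\bigl(x^{2\beta}-2\rho\cos(\pi\beta)x^{\beta}+\rho^{2}\bigr)$, the Laplace transform of the resulting $H^{1,1}_{2,2}$, and the reciprocal formula---is precisely the Mellin--Barnes computation that occupies the bulk of the paper's proof, so your plan coincides with the paper's argument rather than offering an alternative.
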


\begin{proof}
Denote by $F(s)$ the Laplace transform $\mathcal{L}\left\{f(t);s\right\}$. Then the application of the Laplace transform to (\ref{n3.2}) implies
\begin{equation}\label{newone}
F(s)=f(0)\frac{s^{\beta-1}}{s^{\beta}-\left(i\hslash\right)^{-\beta}E}.
\end{equation}
Thus,
\[
f\left(t\right)=\mathcal{L}^{-1}\left\{F\left(s\right);t\right\}=\frac{f\left(0\right)}{2\pi{}i}\int_{\gamma{}-i\infty{}}^{\gamma{}+i\infty{}}G\left(s\right)e^{st}\
dt,
\]
where
\[
G\left(s\right)=\frac{s^{\beta{}-1}}{\left(s^{\beta{}}-{\left(i\hslash{}\right)}^{-\beta{}}E\right)}.
\]
The integrand $G\left(s\right)$ in the above integral has a branch point at $s=0$ and at the following
poles:
\[
s_n=E^{\frac{1}{\beta{}}}{\hslash{}}^{-1}e^{i\left(\frac{2\pi{}n}{\beta{}}-\frac{\pi{}}{2}\right)},\
\ \ \ \ n=0,1,2,3,...\ \ .
\]
Thus, the integral can be evaluated by using the contour integration techniques and the residue theorem. For this, it is necessary to choose the Bromwich contour with a branch cut along the negative real axis. This leads to \cite{7}
\[
f(t)=f(0)\left(\sum_nRes\left(G\left(s\right)e^{st};s_n\right)-\mathcal{F}_{\beta{}}\left(\rho{},t\right)\right),
\]
where $\rho{}={\left(i\hslash{}\right)}^{-\beta{}}E$ and

\[
{\ \mathcal{F}}_{\beta{}}\left(\rho{},t\right)=\
\frac{\rho{}\sin{\left(\pi{}\beta{}\right)}}{\pi{}}\int_0^{\infty{}}\left(\frac{x^{\beta{}-1}}{x^{2\beta{}}-2\rho{}\cos{\left(\pi{}\beta{}\right)x^{\beta{}}}+{\rho{}}^2}\right){\
e}^{-xt}dx.
\]
In the domain enclosed by the Bromwich contour chosen, the arguments of the poles should satisfy
\begin{equation}\label{n3.10}
-\pi{}<\arg{\left(s_n\right)}<\pi{},\ \ n=0,1,2,3,...\ \ .
\end{equation}
In case of positive energy ($E>0)$,
the poles $s_n$ are

\[
s_n=E^{\frac{1}{\beta{}}}{\hslash{}}^{-1}e^{i(\frac{2\pi{}n}{\beta{}}-\frac{\pi{}}{2})},\
\ n=0,1,2,3,...\ \ .
\]
By taking into account (\ref{n3.10}), it becomes clear that 
\[
-\pi{}<\frac{2\pi{}n}{\beta{}}-\frac{\pi{}}{2}<\pi,
\]
and thus $n=0$, i.e. $0<\beta<1$. There is only one pole
$s_0=E^{\frac{1}{\beta{}}}{\hslash{}}^{-1}e^{\frac{-i\pi{}}{2}}$ to be considered. It is easy to see that the residue
at $s=s_0$ is
\[
\text{Res}\left(G(s)e^{st};s_0\right)=\frac{e^{-itE^{\frac{1}{\beta{}}}/\hslash{}}}{\beta{}}.
\]
In case of negative energy $(E<0)$,
the poles $s_n$ are
\[
s_n={\left\vert{}E\right\vert{}}^{\frac{1}{\beta{}}}{\hslash{}}^{-1}e^{i(\frac{2\pi{}n}{\beta{}}-\frac{\pi{}}{2}+\frac{\pi{}}{\beta{}})},\
n=0,1,2,3,...\  \  .
\]
According to (\ref{n3.10}),
\[
-\pi{}<\frac{2\pi{}n}{\beta{}}-\frac{\pi{}}{2}+\frac{\pi{}}{\beta{}}<\pi{}\
~~\text{that implies}~~ \frac{-1}{4}\beta{}-\frac{1}{2}<n<\frac{3}{4}\beta{}-\frac{1}{2}.
\]
If $0<\beta{}\leq{}\frac{2}{3}$, then $-\frac{2}{3}<n<0$ and so there are no poles in this case. In case $\frac{2}{3}<\beta{}<1$, $-\frac{3}{4}<n<\frac{1}{4}$ so $n=0$, i.e. there is only one pole $s_0={\left\vert{}E\right\vert{}}^{\frac{1}{\beta{}}}{\hslash{}}^{-1}e^{i(\frac{\pi{}}{\beta{}}-\frac{\pi{}}{2})} $. The residue at $s=s_0$ in this case is:
\[
\text{Res}\left(G(s)e^{st};\
s_0\right)=\frac{e^{-it{\left\vert{}E\right\vert{}}^{\frac{1}{\beta{}}}e^{\frac{\pi{}i}{\beta{}}}/\hslash{}}}{\beta{}}.
\]
Let us recall the Mellin transform property (\cite{8}, (8.3.1))
\begin{equation}\label{n3.11}
\mathcal{M}\left\{f\left(x^{\alpha}\right); s\right\}=\frac{1}{\alpha}f^{*}\left(\frac{s}{\alpha}\right),~\alpha>0,
\end{equation}
and also the following representation valid for $\beta\notin{\mathbb{Z}}$ and $\rho\in{\mathbb{C}}-\left\{0\right\}$ (\cite{2}, (13.126)),
\begin{equation}\label{n3.12}
\mathcal{M}\left\{\frac{1}{x^2-2\rho{}\cos\left(\pi{}\beta{}\right)x+{\rho{}}^2\
};s\right\}=\frac{-\pi{}{\
\left(-\rho{}\right)}^{s-2}\sin(\beta{}\pi{}\left(s-1\right))}{\sin\left(\pi{}s\right)\
\sin(\beta{}\pi{})}.
\end{equation}
Let us denote 
\[
K\left(x\right)=\frac{1}{x^{2\beta{}}-2\rho{}\cos{\left(\pi{}\beta{}\right)x^{\beta{}}+{\rho{}}^2}}.
\]
Then from (\ref{n3.11}), (\ref{n3.12}) and the identity ($z-$not integer)
\[
\sin(\pi{z})=\frac{\pi}{\Gamma(z)\Gamma(1-z)},
\]
it follows that,
\[
\mathcal{M}\left\{K\left(x\right);s\right\}=\frac{\pi{}\
{\left(-\rho{}\right)}^{\frac{s}{\beta{}}}}{\beta{}{\rho{}}^2\sin{\left(\pi{}\beta{}\right)}}\frac{\Gamma\left(s/\beta\right)\Gamma\left(1-\left(s/\beta{}\right)\right)}{\Gamma\left(\beta{}-s\right)\Gamma\left(1-\beta{}+s\right)}.
\]
The application of the inverse Mellin transform results to

\[
\begin{split}
K\left(x\right)&=\frac{\pi{}}{\beta{}{\rho{}}^2\sin{\left(\pi{}\beta{}\right)}}\left[\frac{1}{2\pi{}i}\int_{\gamma{}-i\infty{}}^{\gamma{}+i\infty{}}\frac{\Gamma\left(s/\beta\right)\Gamma\left(1-\left(s/\beta{}\right)\right)}{\Gamma\left(\beta{}-s\right)\Gamma\left(1-\beta{}+s\right)}{\left(\frac{x}{{\left(-\rho{}\right)}^{\frac{1}{\beta{}}}}\right)}^{-s}\
ds\right] \\
 &=\frac{\pi{}}{\beta{}{\rho{}}^2\sin{\left(\pi{}\beta{}\right)}}H_{2,2}^{1,1}\left(\frac{x}{{\left(-\rho{}\right)}^{\frac{1}{\beta{}}}}\left|\begin{array}{
cc}
\left(0,1/\beta\right),(1-\beta{},1) \\
\left(0,1/\beta\right),(1-\beta{},1)
\end{array}\right)\right.,
\end{split}
\]
and hence
\[
\mathcal{L}\left\{K\left(x\right)x^{\beta{}-1};t\right\}=
\frac{\pi{}}{\beta{}{\rho{}}^2\sin{\left(\pi{}\beta{}\right)}}\mathcal{L}\left\{x^{\beta{}-1}\
H_{2,2}^{1,1}\left(\frac{x}{{\left(-\rho{}\right)}^{\frac{1}{\beta{}}}}\left|\begin{array}{
cc}
\left(0,1/\beta\right),(1-\beta{},1) \\
\left(0,1/\beta\right),(1-\beta{},1)
\end{array}\right);t\right\}\right..
\]
According to (\cite{15}, (2.18))

\begin{multline*}
\int_0^{\infty{}}\left(\frac{x^{\beta{}-1}}{x^{2\beta{}}-2\rho{}x^{\beta{}}\cos{\left(\pi{}\beta{}\right)}+{\rho{}}^2}\right){\
e}^{-xt}\
dx\\
=\frac{\pi{}}{\beta{}{\rho{}}^2\sin{\left(\pi{}\beta{}\right)}}t^{-\beta{}}H_{3,2}^{1,2}\left(\frac{t^{-1}}{{\left(-\rho{}\right)}^{\frac{1}{\beta{}}}}\left|\begin{array}{
ll}
\left(1-\beta{},1\right),\left(0,1/\beta\right),(1-\beta{},1) \\
\left(0,1/\beta\right),\left(1-\beta{},1\right)…………
\end{array}\right)\right..
\end{multline*}
By means of the reciprocal formula for the Fox $H$-function (\cite{15}, (1.58)), 

\begin{multline*}
\int_0^{\infty{}}\left(\frac{x^{\beta{}-1}}{x^{2\beta{}}-2\rho{}x^{\beta{}}\cos{\left(\pi{}\beta{}\right)}+{\rho{}}^2}\right){\
e}^{-xt}\
dx\\
=\frac{\pi{}}{\beta{}{\rho{}}^2\sin{\left(\pi{}\beta{}\right)}}t^{-\beta{}}H_{2,3}^{2,1}\left({\left(-\rho{}\right)}^{\frac{1}{\beta{}}}t\left|\begin{array}{
ll}
\left(1,1/\beta\right),\left(\beta{},1\right)…… \\
\left(\beta{},1\right),\left(1,1/\beta\right),(\beta{},1)
\end{array}\right.\right),
\end{multline*}
and hence
\[
{\
\mathcal{F}}_{\beta{}}\left(\rho{},t\right)=\frac{1}{\beta{}\rho{}t^{\beta{}}}H_{2,3}^{2,1}\left({\left(-\rho{}\right)}^{\frac{1}{\beta{}}}t\left|\begin{array}{
ll}
\left(1,1/\beta\right),\left(\beta{},1\right) \\
\left(\beta{},1\right),\left(1,1/\beta\right),(\beta{},1)
\end{array}\right)\right.,
\]
that proves the theorem.
\end{proof}

\begin{remark}
In \cite{3}, the author attempted to show that the solution of the time fractional equation provided by Theorem 3.2. coincides with the well known solution \cite{7}
\begin{equation}\label{thetfs}
f(t)=f(0)E_{\beta}\left[{\left(\frac{t}{i\hslash}\right)}^{\beta}E\right]
\end{equation}
in the case of $0<\beta<1$ and $E>0$. His proof is substantially based on the formula (\cite{3}, (D32))
\begin{equation}\label{false}
\frac{\Gamma\left(-k/\beta\right)}{\Gamma(-k)}=\pm{}{\left(-1\right)}^{k-\left(\frac{k}{\beta{}}\right)}\frac{\Gamma(k+1)}{\Gamma\left(1+\left(k/\beta\right)\right)},~~k=0,1,2,...,
\end{equation}
which is not valid for all $k\in{\mathbb{N}_{0}}$ and $0<\beta<1$. For instance, if $k=1$ and $\beta=\frac{3}{5}$, then the equation (\ref{false}) leads to $0=\pm{1}$. In fact, the discussion whether the solution (\ref{thetfs}) and the result in Theorem 3.2. coincide is needless because of two arguments. The first argument is that the equation (\ref{n3.2}) possesses a unique solution (provided the initial condition $f(0)=f_{0}$ is fixed). Another argument is that the formula (\ref{thetfs}) and the result in Theorem 3.2. both are inverse Laplace transform of (\ref{newone}) and hence they have to coincide. 
\end{remark}

\end{document}